\newtheorem{propi}{Proposition}
\newtheorem{teo}{Theorem}
\newtheorem{coro}{Corollary}
\newtheorem{defi}{Definition}
\newtheorem{exa}{Example}
\begin{document}
\title{Alternative Axioms in Group Identification Problems\thanks{Carlos Pimienta (UNSW, Australia) and Alejandro Neme (UNSL, Argentina) provided useful comments and criticisms on a previous version of this paper.}}

\author{Federico Fioravanti \ \ \ \ Fernando Tohm\'e \\
INMABB, Universidad Nacional del Sur\\
CONICET \\
Av. Alem 1253, (8000) Bah\'{\i}a Blanca, Argentina\\
e-mail:{\tt federico.fioravanti@uns.edu.ar},  {\tt ftohme@criba.edu.ar}, \\
phone: 54-291-4359890 \ \ \ 54-291-4595138\\}

\date{}
\maketitle

\begin{abstract}
Kasher and Rubinstein (1997) introduced the problem of classifying the members of a group in terms of the opinions of their potential members. This involves a finite set of agents $N=\{1,2,\ldots, n\}$, each one having an opinion about which agents should be classified as belonging to a specific subgroup $J$. A {\em Collective Identity Function} (CIF) aggregates those opinions yielding the class of members deemed $J$. Kasher and Rubinstein postulate axioms, intended to ensure fair and socially desirable outcomes, characterizing different CIFs.
We follow their lead by replacing their liberal axiom by other axioms, constraining the spheres of influence of the agents. We show that some of them lead to different CIFs while in another instance we find an impossibility result.\\
\textbf{Keywords:} Group Identification; Social Choice; Decisiveness; Classification; Liberalism.
\end{abstract}
\section{Introduction}
People, countries and  inanimate objects, among other entities, are customarily classified in groups.
Sometimes these classifications are simple and obvious, as for instance organizing countries by the continent to which they belong. 
However, if we want to identify the members of a particular community or, say, regroup countries in terms of their degree of ``eco-friendliness'', the classification is far from evident, and the assessment of the individuals or nations involved matters for the final result. 
Kasher and Rubinstein (K-R) analyzed this problem (``Who is a J?'', 1997), presenting it as a question of defining appropriate aggregation functions over profiles of opinions. 
Each person in a society is assumed to have an opinion about which individuals, including theirself, may or not belong to the group. The opinion of an individual is thus identified with a subset of the class of all agents. Then, to determine the identities of the individuals who will finally deemed to belong to the group, the opinions of all the individuals are aggregated. There exist many ways in which this aggregation can be carried out, each embodied in an aggregator function, which they call a Collective Identity Function (CIF).
K-R characterize three different CIFs, each of which reflects a particular notion of ``fairness''. 
The ``Liberal'' CIF classifies as a $J$ any individual that sees theirself as being a member of $J$; the ``Dictatorial'' CIF is such that a single individual decides who is in $J$, and the ``Oligarchic'' one, in which this decision is made by a specific group of individuals. 
K-R's original contribution started a line of work extending and modifying it. So, for instance, Saporiti (2012), modified K-R's axioms; Sung and Dimitrov (2003, who provided new characterizations; Cho and Ju (2016) that work on the identification of two groups; Cho and Saporiti (2015), dealing with the incentive aspects of declaring opinions on who belongs to $J$, and Fioravanti and Tohm\'e (2019) using an infinite setting of agents.\\

The axioms that K-R used are classical ones: {\it Monotonicity} (if an individual $i$ belongs already to $J$ under some profile of opinions, if in another profile there are more agents favoring their inclusion in $J$, $i$ has to belong to $J$ in this new profile), {\it Consensus} (if everybody thinks $i$ is in $J$ then $i$ must be a $J$), {\it Symmetry} (if the number of agents that think that $i$ belongs to $J$ is the same as that of those to think that $j \in J$, then either both or none are in $J$) and {\it Independence} (only the opinions on whether $i$ belongs or not to $J$ matter for determining if $i$ is a $J$ or not). These axioms are expected to be satisfied by any ``fair'' aggregator.\\ 

K-R added to them a new axiom, the Liberal Principle, capturing a specific notion of ``liberalism'', stating that if someone considers theirself a member of $J$, then the set of $J$ cannot be empty, while if someone considers theirself not a member of $J$, then not everybody can be classified as a $J$. In this paper we explore the consequences of replacing this axiom by four alternatives axioms.  
In particular, three of our four axioms can be seen as opposed to liberalism. They give the power to  agents to decide whether {\em another} agent should or not belong to $J$. The fourth axiom, instead, is a stronger version of K-R's Liberal Principle.\\

First, we present an axiom that grants each agent the right to be {\em decisive} on the question of whether another given individual belongs or not to $J$. A weaker version of this axiom amounts to allowing at least two agents to decide whether or not another particular agent belongs to the group. We prove that the Strong Liberal CIF is the only CIF that verifies these two axioms.
The other axiom states that at least two agents can be {\em semidecisive} over two other agents (i.e. the positive opinion on whether these agents belong to $J$ determines the actual social decision). In this case we find that there are more CIFs verifying it.\\

These three axioms abstract away rules that are considered natural in societies. For example, it is assumed that parents have the right to decide which religious or ethnical allegiance they choose for their children. No external power can force them to give their children a particular identity.\\

Finally, we consider an extreme version of the Liberal Principle. It has a positive part prescribing that if there is an agent that thinks that some (possibly another) agent is in $J$, then someone must be a $J$. The negative part prescribes that if there is an agent that thinks that there is another agent who is not in $J$, then not everyone can be a $J$. This axiom can be interpreted as indicating that at least some opinions actually matter. The failure of this axiom could mean that some results are obtained against the opinion of all the agents, a clear illiberal situation. We prove that there is no CIF verifying the full version of this axiom, but we find uniqueness results for two CIFs, one verifying its positive and the other its negative part.\\

This paper should not be interpreted in a normative sense. Our main goal is to explore alternative axiomatizations and analyze the properties of the ensuing aggregation functions (if they exist). Despite this purely technical goal, we think that some of the axioms we introduce here can be seen as highly stylized versions of spheres of individual decision-making permitted in human societies. A word of caution is needed here: this paper is a pure exercise of {\em characterization of an axiomatic system}. In this sense it is not intended for any practical (i.e. {\em normative}) use. Its gist is both in the characterization of CIFs and in the proofs of impossibility of satisfaction of a set of axioms.\\

We describe the original axioms in the next Section, while in Section 3 we focus on the Strong Liberal CIF and obtain a characterization of the system with the new set of axioms.
The results on the relation with the extreme version of liberalism are presented in section 4. Section 5 concludes.

\section{Basic notation and axioms}
We consider a set $N$ of individuals, with $\mid N\mid = n <\infty$. 
Each individual $i$ has an ``opinion'' described by a set $C_{i}\subseteq N$. By a slight abuse of language we will use $J$ to denote a subset of $N$ and a {\em type} of individual. With $j\in C_{i}$ we indicate that $i$ thinks that $j$ has to be classified as a $J$. 
On the contrary, if $j\notin C_{i}$, then $i$ does not think that $j$ should be a $J$.
A profile of opinions is a tuple $C=(C_{1},\ldots, C_{n})$ where $C_{i}\subseteq N$ for every $i\in N$. 
Let $P(N)$ be the power set of $N$ and $\textbf{C}$ the set of all possible profiles of opinions, i.e., $\textbf{C}= P(N)^{n}$. 
By a further abuse of language we denote by $J$ the Collective Identity Function (CIF) $J:\textbf{C}\rightarrow P(N)$. 
That is, a CIF is a function that assigns to each profile of opinions a set $J \subseteq N$, of the individuals that will be socially considered to be $J$. \\
Let us now present the axioms introduced by K-R to capture properties that a fair CIF should satisfy:\\
The first one states that if a CIF classifies $i$ as a $J$, then having more people considering $i$ as a $J$ will just reinforce $i$ status. 
The same is true if the CIF classifies an agent $j$ as not being in $J$: if $j$ ``loses'' votes, they keep being a non-$J$.
\begin{itemize}
\item \textbf{Monotonicity}(MON): let $i\in J(C)$ and $C'$ be a profile identical to $C$ except that there are individuals $i$ and $k$, such that $i\notin C_{k}$ and $i\in C'_{k}$; then $i\in J(C')$. 
Analogously, if $i\notin J(C)$ and $ C'$ is identical to $C$, except that there is a $k$ such that if $i\in C_{k}$ and $i\notin C'_{k}$, then $i\notin J(C')$.
\end{itemize}
The next axiom amounts to claim that if everybody thinks that $i$ is a $J$, then the CIF has to classify them as a $J$. 
On the other hand, if no one considers that $i$ is a $J$, then the CIF cannot allow them to be a $J$.
\begin{itemize}\item \textbf{Consensus}(C): if $j\in C_{i}$ for all $i$, then $j\in J(C)$; if $j\notin C_{i}$ for all $i$, then $j\notin J(C)$.
\end{itemize}
The axiom of Symmetry indicates that the aggregator does not classify any two members of the society on any basis other than that embedded in the profile of views. That is, if two agents are ``similar'', the CIF must classify them in the same way.
\begin{itemize}\item \textbf{Symmetry}(SYM): $j$ and $k$ are symmetric in a profile $C$ if
\begin {itemize}
\item [(i)] $C_{j}-\{j,k\}=C_{k}-\{j,k\}$
\item [(ii)] for all $i\in N-\{j,k\}, j\in C_{i}$ iff $k\in C_{i}$
\item [(iii)] $j\in C_{j}$ iff $k\in C_{k}$
\item [(iv)] $j\in C_{k}$ iff $k\in C_{j}$
\end {itemize}
Then, $j\in J(C)$ if and only if $k\in J(C)$.
\end{itemize}
The final axiom ensuring a fair aggregation process indicates that to define whether $i$ is or not a $J$, a CIF must take into account only opinions about $i$.
\begin{itemize}\item \textbf{Independence}(I): consider two profiles $C$ and $ C'$ and let $i$ be an individual in $N$. 
If for all $k\in N$, $i\in C_{k}$ if and only if $i\in C'_{k}$, then $i\in J(C)$ if and only if $i\in J(C')$.\footnote{In their work, K-R use this version to characterize the Dictatorship and Oligarchic aggregator, and use a weaker version to characterize the Liberal aggregator. The latter version ensures their results. 
We use only this definition because it is a standard one in the Social Choice literature (Arrow, 1963; Rubinstein and Fishburn, 1986; Nicolas, 2007; Cho and Ju, 2017).}
\end{itemize}
Kasher and Rubinstein introduce another axiom to capture the idea that an individual's view of themself should be considered relevant. 
\begin{itemize}\item \textbf{The Liberal Principle}(L): if there exists an $i\in N$ such that $i\in C_{i}$, then $J(C)\neq \emptyset$; and if there exists an $i\in N$ such that $i\notin C_{i}$, then $J(C)\neq N$.
\end{itemize}
That is, if someone thinks they are in $J$, then {\em somebody} must be a $J$, and if someone thinks they are not in $J$, then {\em not everyone} can be a $J$.\\ 
There are some cases where our axioms may be more appropriate. For example, in a classroom, where the opinion that a student has about themself should not determine his standing.
Of course, what one thinks about oneself is relevant, but in a variety of cases others may be better judges of individual traits. Moreover, the individuals that have the right to judge should have also the right to determine the outcome. For example, suppose the situation where a family is in an amusement park and wants to ride a rollercoaster. It seems reasonable to assume that, the opinions of the children notwithstanding, whether the little baby goes on the ride or not it is the parents' decision to make. Our axioms intend to capture this intuition.

We introduce the following definitions, which will be useful for the rest of this work:
\begin{defi}
We say that $i\in N$ is {\em decisive} over $j\in N$, if the following is true: $j\in C_{i}$ iff $j\in J(C)$. 
On the other hand $i\in N$ is {\em semidecisive} over $j\in N$, if only one of the following is true: if $j\in C_{i}$ then $j\in J(C)$ or, if $j\notin C_{i}$, then $j\notin J(C)$.
\end{defi}

It is clear that if an individual is decisive then is also semidecisive, while the converse is not true.\\

The first axiom  we introduce allows every agent to be decisive over some other agent (possibly themself).
The second and third axioms, are weaker versions of the first one, allowing at least two agents to be decisive or semidecisive over some other (although they could be themselves) two agents.\\

Now we present the formal definitions of these three axioms, which abstract away and generalize the aforementioned intuitions:
\begin{itemize}
\item \textbf{Decisiveness}(D): for each $i\in N$ there exists a $j\in N$ such that $i$ is decisive over $j$.
\item \textbf{Minimal Decisiveness}(MD): there exist at least two distinct individuals, $i$ and $j$, and two distinct individuals, $k$ and $l$, such that $i$ is decisive over $k$ and $j$ is decisive over $l$.
\item \textbf{Minimal Semi-Decisiveness}(MSD):  there exist at least two distinct individuals, $i$ and $j$, and two distinct individuals, $k$ and $l$, such that $i$ is semidecisive over $k$ and $j$ is semidecisive over $l$.
\end{itemize}
The remaining axiom is an extreme version of the Liberal Principle used by K-R. 
It states the impossibility of $J = \emptyset$, if somebody thinks that there exists at least one individual who should be classified as a $J$.
Similarly, if there is some individual that thinks that there exists someone who is not a $J$, then not everybody can be considered to be a $J$. 
That is, a single positive opinion is enough to indicate that in fact some $J$ exists and a single negative opinion suffices to ensure that there exist at least one non-$J$. These two principles seem very natural in liberal societies in which some people can make decisions on behalf of others (parents for their children, teachers for their students, etc.). \\
\begin{itemize}
\item \textbf{Extreme Liberalism}(EL):
\begin{itemize}
\item[(i)] There exists a pair $\{i,j\}\subseteq N$ such that if $j\in C_{i}$, then $J\neq \emptyset$.
\item[(ii)] There exists a pair $\{i,j\}\subseteq N$ such that if $j\notin C_{i}$, then $J\neq N$.
\end{itemize}
\end{itemize}

\section{Strong Liberal CIF}
Kasher and Rubinstein introduce in their work the following CIF:\begin{center}
$\mathit{StL}(C)=\{i: i\in C_{i}\}$
\end{center}
This aggregator classifies as a $J$ every agent that considers theirself a $J$.
They characterize this CIF as the only one that satisfies (MON), (C), (SYM), (I) and (L). 
Sung and Dimitrov (2003) refine this result, and give a characterization of the Strong Liberal CIF as the only CIF that verifies (SYM), (I) and (L).
Because of its uniqueness, they conclude that if a CIF verifies (SYM), (I) and (L) then it also verifies (MON) and (C). 
In this section, we change the axiom (L) for (D), (MD) or (MSD), and explore which CIFs verify them. 
We think of (D) as a new axiom capturing an opposite notion of liberalism than (L). \\
The next example shows that if a CIF satisfies (L) it does not necessarily verify (D).
\begin{exa}
Let $J(C)=S\in P(N)$ for all $C\in\textbf{C}$ with $S\notin\{\emptyset, N\}$. 
This CIF verifies (L) because its outcome is neither the empty set nor the entire group. And it does not verify (D). 
Suppose $1$ is decisive over $2$ and $S=\{1,2\}$. 
If $C=(\{1\},\{2\})$, then $J(C)=\{1,2\}$, contradicting that $1$ is decisive over $2$.   
\end{exa}
Another example shows that (D) does not imply (L).
\begin{exa}
Suppose $N=\{1,2\}$ and $J$ is a CIF such that $1$ is decisive over $2$ and $2$ decisive over $1$.
It verifies (D) because every agent is decisive over the other agent.
It does not verify (L) because $J(\{1\},\{2\})=\emptyset$ even if $i\in C_{i}$ for $i=1,2$.
\end{exa} 
Although that it seems that (L) and (D) capture opposite notions, the uniqueness of the Strong Liberal CIF is still valid when we replace one for the other.
Even more, we can drop one of the axioms to show this:
\begin{teo}
For $N$ such that $|N|>2$, the only CIF that verifies (SYM) and (D) is the Strong Liberal CIF.\footnote{When $|N|=2$, we find in Example 2 a CIF that verifies (SYM) and (D).}
\end{teo}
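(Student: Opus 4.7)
The plan is to verify that $\mathit{StL}$ satisfies both (SYM) and (D) directly, and then to show that these two axioms uniquely force $J=\mathit{StL}$. For $\mathit{StL}$, each agent is decisive over themselves, giving (D); and (SYM) follows because clause (iii) of the symmetry hypothesis is precisely $j\in C_{j}\Leftrightarrow k\in C_{k}$, i.e.\ $j\in\mathit{StL}(C)\Leftrightarrow k\in\mathit{StL}(C)$.

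For uniqueness, my approach is first to use (D) alone to reduce any admissible $J$ to a rule indexed by a permutation of $N$, and then to use (SYM) to pin that permutation down to the identity. I would start by observing that for each $j\in N$ there can be \emph{at most one} $i$ decisive over $j$: if $i\neq i'$ were both decisive over $j$, then $j\in C_{i}\Leftrightarrow j\in C_{i'}$ would have to hold at every profile, which is refuted by taking $C_{i}=\{j\}$ and $C_{i'}=\emptyset$. Combined with (D), which grants each $i$ at least one ``target'' of decisiveness, a simple counting argument on the finite set $N$ shows that ``decisive over'' is the graph of a bijection $f\colon N\to N$. Writing $g=f^{-1}$, one obtains
\[
J(C)=\{\,j\in N:\ j\in C_{g(j)}\,\}
\]
for some permutation $g$ of $N$; the Strong Liberal CIF corresponds to $g=\mathrm{id}$.

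To force $g=\mathrm{id}$, I would argue by contradiction. Suppose $g(i)=h$ with $h\neq i$. Since $|N|>2$, pick any $k\in N\setminus\{i,h\}$, and consider the profile $C$ defined by $C_{h}=\{i,k\}$ and $C_{l}=\emptyset$ for $l\neq h$. A routine check of the four clauses of (SYM) shows that $i$ and $k$ are symmetric in $C$: clauses (i), (iii), (iv) hold vacuously because $C_{i}=C_{k}=\emptyset$ and neither index appears in any relevant opinion; clause (ii) holds because the only $l\neq i,k$ with a non-empty opinion is $h$, whose opinion contains both $i$ and $k$. Yet the parametric formula yields $i\in J(C)$ (because $i\in C_{h}=C_{g(i)}$) while $k\notin J(C)$ (because $g(k)\neq h$ by injectivity of $g$, so $C_{g(k)}=\emptyset$), contradicting (SYM) applied to the pair $i,k$. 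Hence $g=\mathrm{id}$ and $J=\mathit{StL}$.

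The main obstacle I anticipate is the first reduction: establishing cleanly that (D) forces the decisiveness relation to be single-valued on both sides, so that $J$ is indexed by a genuine bijection of $N$. Once that parametrization is in hand, the witnessing profile is almost forced---one places the two agents to be compared inside the unique non-empty opinion, held by $g(i)$---and the symmetry check and contradiction are mechanical.
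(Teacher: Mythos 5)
Your proposal is correct and follows essentially the same route as the paper: use the fact that no two agents can be decisive over the same individual, plus finiteness, to reduce any CIF satisfying (D) to a permutation-indexed rule $J(C)=\{j : j\in C_{g(j)}\}$, and then use (SYM) on a profile with a single non-empty opinion to rule out non-identity permutations. If anything, your final step is more complete than the paper's, which only exhibits the symmetry violation for the specific instance $N=\{1,2,3\}$, $\sigma=(23)$, whereas your profile $C_{h}=\{i,k\}$, $C_{l}=\emptyset$ otherwise, handles an arbitrary non-identity permutation and arbitrary $|N|>2$.
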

\begin{proof}
First, we show that two agents cannot be decisive over the same agent. 
Suppose, on the contrary, that $i$ and $j$ are decisive over $k$. 
If in the profile $C$, we have $k\in C_{i} $ and $k\notin C_{j} $, then $i\in J(C)$ and $i\notin J(C)$, a contradiction. 
Since we assume that the group of voters is finite, for every CIF that satisfies (D), any agent can be decisive over just one agent. 
Let $ \sigma $ be a permutation of the set $N$ of individuals.
If a CIF verifies that every agent is decisive over any other agent (including themself), it must have the following form:
$$ J_{\sigma}(C)=\{\sigma (i): \sigma (i)\in C_{i}\} $$
If $\sigma$ is the identity permutation we obtain the Strong Liberal CIF, satisfying both axioms.\\
Now consider a permutation $\sigma$ such that $\sigma(i)\neq i$ for some $ i $. 
Suppose that $J_{\sigma}$ satisfies (SYM). 
Then, for every possible profile $(C_{1}, \ldots, C_{N})$, if there exists a pair $j,k$ of symmetric individuals, either $\{j,k\}\subseteq J_{\sigma}(C)$ or $\{j,k\}\nsubseteq J_{\sigma}(C)$. Consider a simple example in which $N=\{1,2,3\}$ and $\sigma =(23)$ (the permutation that exchanges $2$ and $3$, leaving $1$ fixed). 
Consider the profile $C=(\{3\},\{1,3\},\{1\})$. 
Agents $1$ and $3$ are symmetric. 
By definition $J_{\sigma}(C)=\{3\}$, but according to (SYM) it should be that $1\in J_{\sigma}(C)$ if and only if $3\in J_{\sigma}(C)$. Contradiction.
This shows that the CIF verifies (D) but not (SYM).
\end{proof}
Since the $\mathit{StL}$ CIF verifies (MON), (C) and (I), the following corollary can be obtained immediately from Theorem 1.
\begin{coro}
If a CIF satisfies (SYM) and (D), it also verifies (MON), (C) and (I).
\end{coro}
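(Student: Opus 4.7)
The plan is to observe that the corollary is essentially an immediate consequence of Theorem 1 combined with direct verification that the Strong Liberal CIF satisfies the three axioms in question. By Theorem 1 (applied under the hypothesis $|N|>2$, which is the standing assumption), any CIF $J$ that satisfies (SYM) and (D) must coincide with $\mathit{StL}$. So it suffices to check that $\mathit{StL}$ itself verifies (MON), (C) and (I), which the paper already asserts in the paragraph preceding the corollary; the corollary just packages this observation.

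Concretely, I would structure the proof in two lines. First invoke Theorem 1 to conclude $J = \mathit{StL}$. Then check each of the three axioms for $\mathit{StL}(C) = \{i : i \in C_i\}$. For (MON), note that membership of $i$ in $\mathit{StL}(C)$ depends only on whether $i \in C_i$; changing $C_k$ for $k \neq i$ leaves $C_i$ untouched, so the conclusion of (MON) holds trivially (and if $k=i$, then either the change forces $i \in C'_i$ in the first clause or $i \notin C'_i$ in the second, giving the required conclusion). For (C), if $j \in C_i$ for every $i$, then in particular $j \in C_j$, hence $j \in \mathit{StL}(C)$; if $j \notin C_i$ for every $i$, then $j \notin C_j$, so $j \notin \mathit{StL}(C)$. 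For (I), the test $i \in C_i$ is itself a statement about an opinion concerning $i$, so whenever two profiles $C$ and $C'$ agree on all opinions about $i$ they agree in particular on whether $i \in C_i$, and thus on whether $i \in \mathit{StL}$.

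I do not anticipate a genuine obstacle here: the only subtlety is a bookkeeping one, namely being careful in the (MON) case that the agent $k$ whose opinion changes may or may not be $i$ itself, and checking both sub-cases. Since the paper has already stated that $\mathit{StL}$ satisfies these axioms, the proof can be kept very short and essentially consists of the single phrase ``this follows immediately from Theorem 1, since $\mathit{StL}$ verifies (MON), (C) and (I).''
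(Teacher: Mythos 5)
Your proposal is correct and matches the paper's own argument: the paper derives the corollary immediately from Theorem 1 together with the (already known) fact that $\mathit{StL}$ satisfies (MON), (C) and (I), which is exactly your two-step structure. The explicit verification of the three axioms for $\mathit{StL}$ that you supply is routine and consistent with what the paper takes for granted.
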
 
Furthermore, we can get a refinement of this result.
\begin{propi}
If a CIF verifies (D), it also verifies (MON), (C) and (I).
\end{propi}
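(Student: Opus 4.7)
The plan is to leverage the structural characterization of CIFs satisfying (D) that was already obtained in the proof of Theorem~1. There it is shown that, under (D) together with the finiteness of $N$, no two agents can be decisive over the same agent and every agent is decisive over exactly one, so the map $i \mapsto \sigma(i)$ sending $i$ to the unique agent over which $i$ is decisive is a permutation $\sigma$ of $N$. Hence the CIF must take the form
\begin{equation*}
J_\sigma(C) = \{\sigma(i) : \sigma(i) \in C_i\},
\end{equation*}
equivalently, $j \in J_\sigma(C)$ if and only if $j \in C_{\sigma^{-1}(j)}$. With this normal form in hand, checking each of the three axioms is short and direct.

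For (I), note that $j \in J_\sigma(C)$ depends only on whether $j \in C_{\sigma^{-1}(j)}$, which is a piece of data about the opinions concerning $j$. So if $C$ and $C'$ agree on who thinks that $j$ belongs to $J$, then in particular $j \in C_{\sigma^{-1}(j)}$ iff $j \in C'_{\sigma^{-1}(j)}$, so $j \in J_\sigma(C)$ iff $j \in J_\sigma(C')$. For (C), if $j \in C_i$ for every $i$, then in particular $j \in C_{\sigma^{-1}(j)}$, hence $j \in J_\sigma(C)$; the dual is identical.

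For (MON), suppose $i \in J_\sigma(C)$, i.e.\ $i \in C_{\sigma^{-1}(i)}$, and let $C'$ differ from $C$ only in that $i \notin C_k$ and $i \in C'_k$ for some $k$. Since $i \in C_{\sigma^{-1}(i)}$ while $i \notin C_k$, one has $k \neq \sigma^{-1}(i)$, so $C'_{\sigma^{-1}(i)} = C_{\sigma^{-1}(i)}$ still contains $i$, and therefore $i \in J_\sigma(C')$. The negative part is analogous: if $i \notin J_\sigma(C)$, so $i \notin C_{\sigma^{-1}(i)}$, and $C'$ only removes $i$ from some $C_k$, then again $k \neq \sigma^{-1}(i)$, so $i \notin C'_{\sigma^{-1}(i)}$ and $i \notin J_\sigma(C')$.

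The only point requiring a bit of care, and the place where I expect the main obstacle to lie, is verifying that the argument opening the proof of Theorem~1 really does yield the permutation form of $J$ from (D) \emph{alone}, independently of the subsequent use of (SYM). This amounts to re-reading that first paragraph and checking that it invokes only the definition of decisiveness and the finiteness of $N$, which is indeed the case.
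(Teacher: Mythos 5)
Your proof is correct and follows essentially the same route as the paper's: both reduce each axiom to the fact that, under (D) and finiteness, every agent $i$ has a (unique) agent decisive over them, so that membership of $i$ in $J(C)$ depends only on that agent's opinion about $i$. You are in fact slightly more explicit than the paper, whose proof opens with ``Suppose $j$ is decisive over $i$'' and leaves implicit that the existence of such a $j$ for every $i$ rests on the permutation argument from the proof of Theorem~1, a point you verify uses (D) and finiteness alone.
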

\begin{proof}
Suppose $j$ is decisive over $i$.
\begin{itemize}
\item (D) implies (MON). 
If we have $i\in J(C)$, then it must be the case that $i\in C_{j}$. 
Consider a profile $C'$ identical to $C$ except that there is an individual $k$ such that $i\notin C_{k}$ and $i\in C'_{k}$. 
There is no change in the opinion of $j$, so by (D) we have that $i\in J(C')$.
The same happens if $i\notin J(C)$.
\item (D) implies (C). 
If $i\in C_{k}$ for all $k\in N$, in particular it is true that $i\in C_{j}$. 
So by (D), we have $i\in J(C)$. 
If $i\notin C_{k}$ for all $k\in N$, we have that, in particular $i\notin C_{j}$. 
Again, by (D) we have $i\notin J(C)$.
\item (D) implies (I).
Let $C,C'\in\textbf{C}$ be two profiles such that for all $k\in N$, $i\in C_{k}$ if and only if $i\in C'_{k}$. 
If we have that $i\in J(C)$, then it must be the case that $i\in C_{j}$, so $ i\in C'_{j}$. 
Then by (SL) we have $i\in J(C')$. 
The same happens if $i\notin J(C)$.
\end{itemize}
\end{proof}
An example illustrates that (SYM) does not imply any of the other axioms:\footnote{This example is used in Sung and Dimitrov (2003) to introduce a CIF that verifies (SYM) and (L) but not (I).}
\begin{exa}
Define the following CIF:
$$J(C)= \left\{ \begin{array}{lcc}
             \mathit{StL}(C) &   if  & \mathit{StL(C)}\in\{\emptyset, N\} \\
             \\ N-\mathit{StL}(C) & & otherwise \\            
             \end{array}
\right.   $$
This CIF verifies (SYM), but not (MON), (C), (I) and (D).
\end{exa}
Minimal Decisiveness is a weaker version of Decisiveness, since it only requires that at least two agents must be decisive. 
In the proof of \mbox{Theorem 1}, we do not make any reference to the number of agents who are decisive.
So weakening the axiom does not change the result. 
We obtain the following:
\begin{coro}
The only CIF that verifies (SYM) and (MD) is the Strong Liberal CIF.
\end{coro}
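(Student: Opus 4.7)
The strategy is to revisit the proof of Theorem 1 step by step and verify that each of its steps used at most two decisive pairs rather than decisiveness of every agent, so that (MD) can replace (D) directly in the first two steps; the main nontrivial work is then to bootstrap from ``two self-decisive agents'' to ``every agent is self-decisive''.

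First, the injectivity argument is unchanged: if distinct $i,j$ were both decisive over the same $k$, a profile with $k\in C_i$ and $k\notin C_j$ would force $k\in J(C)$ and $k\notin J(C)$ simultaneously. Combined with (MD), this supplies an injective partial map $\sigma$ from the set $D$ of decisive agents into $N$, with $|D|\geq 2$. Second, for each $i\in D$ I would show $\sigma(i)=i$ by the same argument as the ``$\sigma\neq\operatorname{id}$'' step of Theorem 1: assuming $\sigma(i)\neq i$ and using $|N|>2$, one picks a third agent $m$ and constructs a small profile in the spirit of $(\{3\},\{1,3\},\{1\})$ in which $\sigma(i)$ and $m$ are symmetric but decisiveness of $i$ over $\sigma(i)$ pins down their classifications to disagree, contradicting (SYM).

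The main obstacle is the third step, which is implicit in the author's remark but genuinely delicate: promoting ``every decisive agent is self-decisive'' to ``every agent $m\in N$ satisfies $m\in J(C)\Leftrightarrow m\in C_m$''. For $m\notin D$, the idea is to tie $m$'s classification to that of one of the self-decisive agents supplied by (MD). Given a profile $C$, I would build an auxiliary profile $\widetilde C$ in which $m$ is symmetric to a self-decisive $i$ with $i\in\widetilde C_i$ iff $m\in C_m$; self-decisiveness of $i$ then forces $i\in J(\widetilde C)$, and (SYM) transfers this outcome to $m\in J(\widetilde C)$. Without (I) a priori at our disposal for $m\notin D$, the hard part is propagating the classification of $m$ from $\widetilde C$ back to the original $C$; this has to be done by chaining (SYM) across a sequence of carefully chosen intermediate profiles (equivalently, by first deriving a local independence property for agent $m$ from the joint force of (SYM) and the two self-decisive agents guaranteed by (MD)). Once this propagation is achieved, $J=\mathit{StL}$ follows and the corollary is proved.
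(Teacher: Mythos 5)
Your first two steps are essentially sound and track the paper's own reasoning for Theorem 1: the injectivity of the decisiveness relation survives the weakening to (MD), and a (SYM) argument shows the two decisive agents guaranteed by (MD) must be self-decisive. (One adjustment: under (MD) you cannot pin down the classification of an arbitrary ``third agent $m$'', since only the two controlled agents $k$ and $l$ have determined membership; the symmetric pair to compare is $k$ and $l$ themselves, with $k\in C_i$ and $l\notin C_j$.) You are also right to distrust the paper's one-line justification that the proof of Theorem 1 ``does not reference the number of decisive agents'': the explicit form $J_\sigma$ in that proof is obtained precisely from the hypothesis that \emph{every} agent is decisive over someone, which (MD) does not supply.

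The genuine gap is your third step, and it cannot be closed: (SYM) imposes no constraint on a profile in which an agent is symmetric to nobody, so there is no chain of intermediate profiles along which to propagate the classification of an agent outside the decisive set, and no ``local independence'' for such an agent can be derived. Concretely, take $N=\{1,2,3\}$ and define $i\in J(C)$ iff $i\in C_i$ for $i\in\{1,2\}$, while $3\in J(C)$ iff either ($3$ is symmetric to $1$ or to $2$ in $C$ and $3\in C_3$) or ($3$ is symmetric to neither and $3\notin C_3$). Agents $1$ and $2$ are decisive over themselves, so (MD) holds; (SYM) holds because for every symmetric pair the membership of each member reduces to $i\in C_i$, and condition (iii) of the symmetry definition equates these; yet for $C=(\{2\},\{3\},\{1\})$ no two agents are symmetric, so $J(C)=\{3\}\neq\emptyset=\mathit{StL}(C)$. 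Hence (SYM) and (MD) alone do not single out the Strong Liberal CIF; some form of (I), or the full strength of (D), is needed to control the agents outside the reach of the decisiveness map. The step you flag as ``genuinely delicate'' is exactly the step that fails, and the corollary as stated cannot be proved by any completion of this plan.
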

The results from \mbox{Corollary 1} and \mbox{Proposition 1} remain the same if we replace (D) by (MD).\\ 
We give two examples of CIF's that will be useful in the future.\footnote{Samet and Schmeidler (2003) defines this as consent rules. In particular, the $U$ CIF is the consent rule where $s=n$ and $t=1$, and the $\mathit{In}$ CIF is the consent rule where $s=1$ and $t=n$.}
The \textit{Unanimity} $(\mathit{U})$ CIF, that prescribes that $i$ is a $J$ if and only if everybody thinks that $i$ is a $J$. 
Formally:
$$U(C)=\{i: i\in C_{k} \ for\ all\ k\in N\}$$
And the \textit{Inclusive} $(\mathit{In})$ CIF, that indicates that $i$ is a $J$ if someone thinks that $i$ is a $J$:
$$\mathit{In}(C)=\{i: i\in C_{k} \ for\ some\ k\in N\}$$ 
Minimal Semi-Decisiveness is weaker than Minimal Decisiveness, since it only requires that at least two agents must be semidecisive.
When we use such a weak notion, we find more CIFs that satisfy (SYM) and (MSD). Moreover, they also verify (MON), (C) and (I):
\begin{propi}
The Strong Liberal CIF is not the only CIF that verifies (MON), (C), (SYM), (I) and (MSD).
\end{propi}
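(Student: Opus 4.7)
The claim is a non-uniqueness statement, so the plan is to exhibit a specific CIF different from $\mathit{StL}$ that satisfies all five listed axioms. The natural candidate, given that it was just introduced, is the Unanimity CIF $U$, so I would propose $U$ as the witness. Confirming that $U \neq \mathit{StL}$ is immediate: in any profile where some $i$ has $i \in C_i$ but $i \notin C_k$ for some other $k$, we have $i \in \mathit{StL}(C)$ while $i \notin U(C)$.

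Next I would verify the four classical axioms for $U$, each following quickly from the definition. Consensus is built in: unanimous support places $i$ in $U(C)$, and unanimous rejection keeps $i$ out. Monotonicity is clear because adding $i$ to some $C_k$ cannot destroy a pre-existing unanimous support of $i$, while removing $i$ from some $C_k$ cannot produce one. Independence holds because $U$'s verdict on $i$ depends only on the opinions about $i$. For Symmetry, when $j$ and $k$ are symmetric, conditions (ii)--(iv) of the axiom together force the set of agents whose opinion contains $j$ to coincide with the set whose opinion contains $k$, so $j$ is unanimously supported if and only if $k$ is.

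The main step is verifying (MSD), and this is where I expect the only real subtlety to arise, since it requires parsing the definition of semidecisiveness correctly: the phrase ``only one of the following is true'' is naturally read as ``at least one of the two one-sided implications holds globally'', with decisiveness demanding both. Under $U$, whenever $b \notin C_a$ we have $b \notin U(C)$, because unanimity is broken; hence every agent is semidecisive over every other agent in the ``negative'' direction, and (MSD) follows at once by choosing any distinct $i,j$ and any distinct $k,l$. The Inclusive CIF $\mathit{In}$ furnishes a symmetric alternative witness, since there every agent is semidecisive in the ``positive'' direction (whenever $b \in C_a$ we have $b \in \mathit{In}(C)$). Either $U$ or $\mathit{In}$ therefore establishes the proposition.
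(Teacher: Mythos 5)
Your proposal is correct and follows essentially the same route as the paper: it exhibits $U$ and $\mathit{In}$ as witnesses and verifies (MSD) via the negative-direction semidecisiveness for $U$ and the positive-direction for $\mathit{In}$. You are somewhat more careful than the paper in spelling out why the four classical axioms hold, but the argument is the same.
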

\begin{proof}
It is clear that the Strong Liberal CIF verifies the proposed axioms. 
The $U$ and $\mathit{In}$ are two examples.
Both CIFs satisfy all the axioms. With $U$, every agent is semidecisive since, if $i\notin C_{j}$ for some $j\in N$, then $i\notin U(C)$. 
With $\mathit{In}$, every agent is semidecisive since, if $i\in C_{j}$ for some $j\in N$, then $i\in In(C)$.
\end{proof}
\section{Extreme Liberalism}
Let us recall the definition of this axiom:\\\\\textbf{Extreme Liberalism}(EL):
\begin{itemize}
\item[(i)] If there exists a pair $\{i,j\}\subseteq N$ such that $j\in C_{i}$, then $J\neq \emptyset$.
\item[(ii)] If there exists a pair $\{i,j\}\subseteq N$ such that $j\notin C_{i}$, then $J\neq N$.
\end{itemize} 
From the definitions, we can see that (EL) implies (L). 
The idea that (EL) captures, is that if someone thinks that there is at least one $J$, then the group of socially accepted people cannot not be empty.
Similarly, if somebody thinks that some individual is not a $J$, then not everybody can be a $J$.
The Liberal Principle (L) captures the idea that the opinion about oneself is important.
(EL) goes further, stating that the opinion of any individual about somebody else matters.
When we turn to such an extreme concept of ``liberalism'', we obtain an impossibility result.
This is easy to see from the fact that (EL) implies (L) and the Strong Liberal CIF does not verify (EL).
Beside this issue, we can show the uniqueness of CIFs that satisfy parts (i) and (ii) of (EL) separately.
 \begin{teo}
The Inclusive CIF $\mathit{In}$ is the only CIF that verifies (MON), (C), (I) and (EL)(i).\footnote{The Inclusive CIF $\mathit{In}$ is defined in Section 3.}
\end{teo}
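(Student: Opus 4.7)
The plan is to prove that any CIF $J$ satisfying (MON), (C), (I) and (EL)(i) coincides with $\mathit{In}$. Checking that $\mathit{In}$ itself verifies the four axioms is immediate from its definition: monotonicity and independence are built into the ``some $k$'' formulation; consensus is obvious; and if $j\in C_i$ for some pair then $j\in \mathit{In}(C)$, so $\mathit{In}(C)\neq\emptyset$. I would dispose of this at the start.

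For uniqueness, fix a CIF $J$ satisfying the four axioms, a profile $C$, and an agent $i\in N$; I want to show $i\in J(C)$ if and only if $i\in C_k$ for some $k$. The easy direction is immediate: if $i\notin C_k$ for every $k$, then (C) gives $i\notin J(C)$.

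For the reverse direction, suppose there is some $k^{*}\in N$ with $i\in C_{k^{*}}$. By (I), membership of $i$ in $J(\cdot)$ depends only on the set $\{k : i\in C_k\}$, so it suffices to exhibit one profile $\widetilde{C}$ with $\{k:i\in\widetilde{C}_k\}=\{k:i\in C_k\}$ and $i\in J(\widetilde{C})$. I would build such a $\widetilde{C}$ in two stages. First, define the \emph{minimal profile} $C^{*}$ by $C^{*}_{k^{*}}=\{i\}$ and $C^{*}_m=\emptyset$ for every $m\neq k^{*}$. Since $i\in C^{*}_{k^{*}}$, axiom (EL)(i) forces $J(C^{*})\neq\emptyset$; on the other hand, for each $j\neq i$ nobody thinks $j$ is a $J$ in $C^{*}$, so (C) gives $j\notin J(C^{*})$. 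Hence $J(C^{*})=\{i\}$, and in particular $i\in J(C^{*})$. Second, starting from $C^{*}$, iteratively add $i$ to $C_m$ for each $m$ in $\{k:i\in C_k\}\setminus\{k^{*}\}$, applying (MON) at every step to preserve $i\in J$. The resulting profile $\widetilde{C}$ satisfies $\{k:i\in\widetilde{C}_k\}=\{k:i\in C_k\}$ and $i\in J(\widetilde{C})$, so (I) yields $i\in J(C)$.

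The only genuinely subtle step is the construction of the minimal profile: (EL)(i) alone only tells us that $J(C^{*})$ is non-empty, and I must rely on (C) to collapse the possibilities and conclude that the single element forced in is precisely $i$. After that, (MON) enlarges the support one opinion at a time and (I) transfers the conclusion to the arbitrary target profile; both are routine.
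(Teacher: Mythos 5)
Your proposal is correct and follows essentially the same route as the paper: both arguments hinge on the minimal profile in which exactly one agent lists $i$ and nothing else, where (EL)(i) together with (C) forces $J=\{i\}$, and then (MON) and (I) transfer the conclusion to the arbitrary profile. The only difference is presentational (you argue directly and build the support up via (MON), while the paper argues by contradiction and strips the support down), which does not change the substance.
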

\begin{proof}
First of all, notice that $\mathit{In}$ satisfies these four axioms. 
Consider a different CIF $J$ that also verifies them. 
Suppose that there is a profile $C$ such that $i\in C_{k} $ for some $k$ but $i\notin J(C)$.
By applying (MON) several times, we can find a profile $C'$ identical to $C$ with the exception that for every $j\neq k$, $i\notin C'_{j}$ so that $i\notin J(C')$. 
Let $C''$ be the profile such that $C''_{k}=\{i\}$ and $C''_{j}=\emptyset$ for all $j\neq k$. 
By (C), $J(C'')$ cannot include any individual from $N-\{i\}$. 
So $J(C'')\in\{\emptyset, \{i\}\}$. 
Because this CIF satisfies part (i) of (EL), it follows that $J(C'') \neq \emptyset$ and thus, $J(C'')=\{i\}$. 
But we obtain a contradiction with (I), because agent $i$ is treated in the same way in profiles $C'$ and $C''$ but $i\notin J(C')$ and $i\in J(C'')$. 
Then, there does not exist a CIF other than the \textit{Inclusive} CIF satisfying the axioms.
\end{proof}
On the other hand, we have that we need all the axioms to characterize the $\mathit{In}$ CIF:
\begin{propi}
The Inclusive CIF is not the only CIF satisfying three out of the four axioms (MON), (C), (I) and (EL)(i).
\end{propi}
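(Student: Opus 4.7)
The plan is to establish independence of the four axioms in the style of the previous propositions, by exhibiting for each axiom a CIF distinct from $\mathit{In}$ that satisfies the remaining three---except that, as I will note, (MON) turns out to be redundant and so no genuine ``fourth'' example is possible. The proposition itself merely asserts existence of one alternative CIF, which any of the three constructions below provides.

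For dropping (EL)(i) the Unanimity CIF $U$ from Section 3 works: it satisfies (MON), (C) and (I), but at a profile where only one agent mentions someone (e.g.\ $C_{1}=\{2\}$ and $C_{k}=\emptyset$ for $k\geq 2$) one has $U(C)=\emptyset$ while a positive statement exists, breaking (EL)(i). For dropping (C) the constant rule $J(C)=N$ satisfies (MON) and (I) vacuously and (EL)(i) trivially, but violates (C) at the all-empty profile. For dropping (I) I propose a ``single-point perturbation'' of $\mathit{In}$: fix (for $n\geq 3$) the profile $C^{*}=(\{2\},\{1,3\},\emptyset,\ldots,\emptyset)$ and set $J(C)=\mathit{In}(C)$ for every $C\neq C^{*}$ and $J(C^{*})=\mathit{In}(C^{*})\setminus\{1\}=\{2,3\}$. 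At $C^{*}$ every $S_{i}$ is neither $\emptyset$ nor $N$, so (C) is not disturbed; $J(C^{*})\neq\emptyset$ and $J=\mathit{In}$ elsewhere, so (EL)(i) is preserved; and (I) fails because the profile $(\emptyset,\{1,3\},\emptyset,\ldots,\emptyset)$ shares $S_{1}=\{2\}$ with $C^{*}$ yet has $1$ in its $\mathit{In}$. For (MON) the key observation is that the unique profile from which $C^{*}$ is reachable by adding a ``yes'' about $1$ has $1\notin\mathit{In}$ and hence $1\notin J$, so the monotonicity premise fails vacuously for the critical transition; the analogous check for removing a ``yes'' about $1$ from $C^{*}$ is also immediate.

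Finally I would point out that (MON) itself cannot be dropped because (C), (I) and (EL)(i) already force $J=\mathit{In}$. Indeed, by (I) each classification $i\in J(C)$ depends only on $S_{i}$; by (C) this dependence sends $\emptyset$ to $0$ and $N$ to $1$; and if some $S_{i}=T\neq\emptyset$ were sent to $0$ for agent $i$, the profile $C_{p}=\{i\}$ for $p\in T$ and $C_{p}=\emptyset$ otherwise would satisfy the hypothesis of (EL)(i) but give $J(C)=\emptyset$, a contradiction. The main obstacle in the whole argument is precisely the construction of the non-Independence example: one needs to break (I) while preserving (C), (MON) and (EL)(i) simultaneously, and the single-point perturbation above succeeds because $C^{*}$ lies at a non-consensus profile where (MON) imposes no binding local constraint and where enough of $\mathit{In}(C^{*})$ is retained to keep $J$ non-empty whenever (EL)(i) demands it.
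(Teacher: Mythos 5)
Your proposal is correct, and it diverges from the paper's proof in two respects worth recording. The examples for dropping (C) (the constant rule $J(C)=N$) and for dropping (EL)(i) (you use $U$; the paper uses $\mathit{StL}$) are essentially the paper's. For dropping (I), the paper instead uses an iterated-consensus rule with an $\mathit{In}$-type fallback; your single-point perturbation of $\mathit{In}$ at $C^{*}=(\{2\},\{1,3\},\emptyset,\ldots,\emptyset)$ is a valid alternative, and the (MON) verification --- the only delicate step --- is done correctly: the sole transition into $C^{*}$ that adds a vote for agent $1$ starts from $(\{2\},\{3\},\emptyset,\ldots,\emptyset)$, where $1\notin\mathit{In}$, and any profile that reaches $C^{*}$ by deleting a vote for $i$ contains that vote and hence has $i\in\mathit{In}$, so the premises of both parts of (MON) fail where they could bite. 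One small slip: at $C^{*}$ the supporter sets of agents $4,\ldots,n$ \emph{are} empty, so ``every $S_{i}$ is neither $\emptyset$ nor $N$'' is false as stated; but (C) then only requires those agents to be excluded from $J(C^{*})$, which they are, so the conclusion stands.

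The substantive divergence is your closing observation, and it is right: (C), (I) and (EL)(i) alone force $J=\mathit{In}$, by exactly the argument you give (membership of $i$ depends only on its supporter set $T$ by (I); (C) fixes $T=\emptyset$ and $T=N$; and any nonempty $T$ mapped to exclusion produces, on the profile supported only on $T$ and only about $i$, the outcome $J(C)=\emptyset$ despite a positive opinion). Hence no example dropping only (MON) can exist, and the paper's first example --- $J(C)=\{i:0<|\{k:i\in C_{k}\}|<\frac{|N|}{2}\ \mbox{or}\ |\{k:i\in C_{k}\}|=|N|\}$ --- is in fact flawed: in a profile where one agent is supported by exactly $\lceil |N|/2\rceil$ voters and nobody else is supported by anyone, $J(C)=\emptyset$ although positive opinions exist, so (EL)(i) fails. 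Your proof therefore establishes the proposition as literally stated, and additionally shows that (MON) is redundant in Theorem 2, whereas the stronger independence claim implicit in the paper's four-example proof is false.
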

\begin{proof}
The proof consists of 4 examples, each of which presents a CIF satisfying exactly three of the four axioms and none of these CIFs is the {\it Inclusive} one.
\begin{itemize}
\item  Consider the CIF $J(C)=\{i:0<|\{k|i\in C_{k}\}|<\frac{|N|}{2}\ or \ |\{k:i\in C_{k}\}|=N\}$. 
That is, individuals are deemed to be $J$ either if they are considered to be $J$ by less than half of all the individuals or if everybody agrees on that they are $J$.
This CIF does not satisfy (MON) because if an agent $i$ that is already a $J$ obtains more votes, eventually can exceed $\frac{|N|}{2}$ votes without requiring the approval of all the agents in $N$.
\item The CIF $J(C)=N$ for all $C\in\textbf{C}$ satisfies all the axioms except (C). 
Suppose that it satisfies (C) and $1\notin C_{i}$ for all $i\in N$. 
Then by (C), $1\notin J(C)=N$, a contradiction.
\item Consider the following CIF defined inductively. 
Let $J(C,0)$ be the set of all individuals for which there is a consensus in $C$ on that they are $J$.
Expand the set inductively by adding, at the t-th stage, those members for whom there is a consensus among $J(C,t-1)$ that they are $J$. 
In the case that $J(C,0)=\emptyset$ we define $J(C)=\{i:|\{k|i\in C_{k}\}|\geq 1\}$. 
This CIF verifies all the axioms except (I).
Consider for example the profiles $C=(\{1\},\{2\},\{3\})$ and $C'=(\{1,2\},\{2\},\{2,3\})$. 
Thus $J(C)=\{1,2,3\}$, but $J(C')=J(C',0)=J(C',1)=\{2\}$. 
Then (I) is not satisfied for agent $1$.
\item The Strong Liberal CIF satisfies all the axioms except (EL)(i).
For example, consider the profile $C=(\{2\},\{1\})$. 
Then $\mathit{StL}(C)=\emptyset$ violating (EL)(i).
\end{itemize} 
\end{proof}
We have a similar uniqueness result when we use part (ii) of (EL).
\begin{teo}
The Unanimity CIF $\mathit{U}$ is the only CIF that verifies (MON), (C), (I) and (EL)(ii).\footnote{Again, see Section 3 for the definition of $\mathit{U}$.}
\end{teo}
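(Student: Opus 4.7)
The plan is to mirror the proof of Theorem 2 (for the Inclusive CIF), dualizing every step: where that proof started from a ``false positive'' vote-into-$J$ situation and drove a contradiction by shrinking the profile down to a single positive vote and invoking (EL)(i), here I will start from a ``false positive in $J$'' situation (an agent in $J$ despite someone dissenting) and drive a contradiction by enlarging the profile to a near-universal one and invoking (EL)(ii).

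First I would check that $\mathit{U}$ itself satisfies (MON), (C), (I) and (EL)(ii); these are immediate from its definition, with (EL)(ii) holding because if some $k$ has $j\notin C_{k}$ then $j\notin \mathit{U}(C)$, so $\mathit{U}(C)\neq N$. Then let $J$ be another CIF satisfying the four axioms, and suppose toward a contradiction that there is a profile $C$ and an agent $i$ such that $i\in J(C)$ yet $i\notin C_{k}$ for some $k\in N$. Applying (MON) repeatedly, I can pass to a profile $C'$ identical to $C$ for agent $k$ but with $i\in C'_{j}$ for every $j\neq k$; each step adds a vote for $i$, so $i\in J(C')$ is preserved.

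Next I would build the ``almost unanimous'' profile $C''$ defined by $C''_{j}=N$ for every $j\neq k$ and $C''_{k}=N-\{i\}$. Every agent $m\neq i$ receives the vote of every other agent in $C''$, so (C) forces $m\in J(C'')$, giving $N-\{i\}\subseteq J(C'')$. Since $i\notin C''_{k}$, part (ii) of (EL) yields $J(C'')\neq N$, and therefore $i\notin J(C'')$, so that $J(C'')=N-\{i\}$. But in both $C'$ and $C''$ the agent $i$ is viewed identically by everyone ($i\in C'_{j}\iff j\neq k\iff i\in C''_{j}$), so (I) demands $i\in J(C')\iff i\in J(C'')$, contradicting $i\in J(C')$ and $i\notin J(C'')$. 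This forces $J(C)\subseteq \mathit{U}(C)$ for every $C$. The reverse inclusion $\mathit{U}(C)\subseteq J(C)$ is already delivered by (C), since any $i\in \mathit{U}(C)$ satisfies $i\in C_{k}$ for all $k$, so $i\in J(C)$.

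The only step requiring care is the (MON)-driven passage from $C$ to $C'$: one must ensure that each added vote for $i$ really does preserve $i\in J(\cdot)$, and in particular that the ``positive half'' of (MON) is applicable even at intermediate profiles; this is a routine finite induction on the agents in $N-\{k\}$ whose opinions about $i$ are being flipped. Everything else — (C) fixing the fate of the non-$i$ agents in $C''$, (EL)(ii) excluding $i$ from $J(C'')$, and (I) producing the contradiction — is symmetric to the argument already carried out for $\mathit{In}$ and poses no real obstacle.
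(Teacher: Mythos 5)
Your proposal is correct and follows essentially the same route as the paper's own proof: the same (MON) saturation to $C'$, the same near-unanimous profile $C''$ with $C''_{k}=N-\{i\}$, and the same use of (C), (EL)(ii) and (I) to derive the contradiction. Your explicit statement of the two inclusions $J(C)\subseteq U(C)$ and $U(C)\subseteq J(C)$, and your correction of the quantifier to ``for every $j\neq k$'' in the construction of $C'$, only make the argument cleaner.
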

\begin{proof}
Clearly $\mathit{U}$ satisfies these four axioms. 
Consider a different CIF $J$ that also verifies them. 
Suppose there is a profile $C$ such that $i\notin C_{k} $ for some $k$ but $i\in J(C)$. 
By applying (MON) several times, we can find a profile $C'$ identical to $C$ with the exception that for every $j\neq i$, $i\in C^{'}_{j}$ so that $i\in J(C')$. 
Let $C^{''}$ be the profile such that for every $j\in N-\{k\}$, $C^{''}_{j}=N$ and $C^{''}_{k}=N-\{i\}$. 
By (C), $N-\{i\}\subseteq J(C'')$. So $J(C'')\in\{N-\{i\}, N\}$. 
Because this CIF satisfies part (ii) of (EL), it is not possible that $J(C'')=N$, so we have $J(C'')=N-\{i\}$. 
Again, we obtain a contradiction with (I), because agent $i$ is treated in the same way in profiles $C'$ and $C''$ but $i\in J(C')$ and $i\notin J(C'')$. 
Then, there does not exist a CIF other than the \textit{Unanimity} CIF satisfying the axioms.
\end{proof}
We have, as with the $\mathit{In}$ CIF, that axioms are independent in the characterization of the $\mathit{U}$ CIF:
\begin{propi}
The Unanimity CIF is not the only CIF that verifies three out of the four axioms (MON), (C), (I) and (EL)(ii).
\end{propi}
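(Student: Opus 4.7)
The plan is to follow the template of Proposition~2 exactly, exhibiting four CIFs, each satisfying precisely three of the four axioms (MON), (C), (I), (EL)(ii), and none of them being the Unanimity CIF. Each of the four examples is the natural ``dual'' of the corresponding example in Proposition~2, obtained by swapping the roles of $\emptyset$ and $N$ (equivalently, of $\mathit{In}$ and $U$).

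For the failure of (MON), I would try a non-monotone variant of unanimity of the form $J(C)=\{i:|\{k:i\in C_{k}\}|=|N|\}\cup\{i:|\{k:i\in C_{k}\}|\in R\}$, with $R$ a carefully chosen range of vote counts strictly below $|N|$ that is not an up-set, mirroring the ``minority or unanimous'' construction from the proof of Proposition~2. For the failure of (C), the trivial aggregator $J(C)=\emptyset$ for every $C$ works: it vacuously satisfies (MON) and (I), satisfies (EL)(ii) because $\emptyset\neq N$, and fails (C) whenever some agent is unanimously supported. For the failure of (I), I would dualize the inductive construction of Proposition~2: set $\bar J(C,0)=\{i:i\notin C_{k}\text{ for all }k\}$ (agents unanimously rejected) and expand $\bar J(C,t)=\bar J(C,t-1)\cup\{i:i\notin C_{k}\text{ for all }k\notin \bar J(C,t-1)\}$; when $\bar J(C,0)=\emptyset$, use the fallback $\bar J(C)=\{i:\exists k,\,i\notin C_{k}\}$, and set $J(C)=N\setminus \bar J(C)$. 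A failure of (I) is then exhibited, for $|N|=3$, by the pair of profiles $C=(\{1,2,3\},\emptyset,\emptyset)$ and $C'=(\{1,2\},\emptyset,\emptyset)$: the column of agent $1$ is $\{1\}$ in both, yet $1\notin J(C)=\emptyset$ while $1\in J(C')=\{1,2\}$. For the failure of (EL)(ii), I would take the Strong Liberal CIF $\mathit{StL}(C)=\{i:i\in C_{i}\}$; at the profile $C_{1}=\{1\}$, $C_{j}=N$ for $j\neq 1$, every agent self-supports, so $\mathit{StL}(C)=N$, while $C_{1}\neq N$, contradicting (EL)(ii).

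The step I expect to be the main obstacle is the first one. (EL)(ii) forces $J(C)=N$ only when the vote matrix is all ones, and combined with (I) this acts on each column independently. Hence any attempt to place an agent in $J$ at a vote count strictly less than $|N|$ threatens to generate a profile in which every agent simultaneously enjoys exactly such ``in-$J$'' intermediate support, so that $J=N$ occurs at a non-all-ones matrix and (EL)(ii) fails. The set $R$ must therefore be engineered so that no vote-matrix can simultaneously realise $\{l:i\in C_{l}\}\in\{|N|\}\cup R$ for every $i$ without forcing every $C_{l}=N$. Solving this combinatorial obstruction in detail is the non-trivial part; the remaining three examples are essentially routine and parallel their Proposition~2 counterparts.
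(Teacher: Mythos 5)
Three of your four examples are sound and essentially coincide with the paper's: the constant CIF $J(C)=\emptyset$ for the failure of (C) and the Strong Liberal CIF for the failure of (EL)(ii) are exactly the ``analogous to Proposition 3'' examples the paper invokes, and your dualized inductive construction for the failure of (I) is a correct variant (the paper instead keeps the consensus-expansion and switches the fallback to unanimity when the expansion reaches $N$, but your version checks out on the profiles you give, and it does satisfy (C), (MON) and (EL)(ii)).

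The genuine gap is the (MON) example, and the obstruction you flag is not merely ``non-trivial'' --- it is insurmountable for any rule of the form you propose. Suppose $J(C)=\{i:|\{k:i\in C_{k}\}|\in\{n\}\cup R\}$ with $R\subseteq\{1,\dots,n-1\}$ nonempty (you need $0\notin R$ for (C)). Pick $r\in R$ and take the circulant profile $C_{k}=\{k,k+1,\dots,k+r-1\}$ with indices taken mod $n$: every agent receives exactly $r$ votes, so $J(C)=N$, yet $C_{1}\neq N$, violating (EL)(ii). Hence $R$ must be empty and the rule collapses to Unanimity, which is monotone. In other words, any anonymous, independent, count-based rule that ever admits an agent at an intermediate support level fails (EL)(ii), so the exact dual of Proposition 3's ``minority-or-unanimity'' example does not exist. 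The paper escapes by abandoning anonymity: its example privileges agent $1$, setting $J(C)$ equal to the consensus set, and when that set is empty and $C_{1}\neq N$ taking instead the agents approved by $1$ and rejected by everyone else (with $J(C)=\emptyset$ when $C_{1}=N$); non-monotonicity is then exhibited at $C=(\{1\},\{2\},\{3\})$ versus $C'=(\{1\},\{1,2\},\{3\})$, where agent $1$ gains a vote and drops out of $J$. You would need to replace your first example by a non-anonymous construction of this kind to complete the proof.
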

\begin{proof}
The proof consists again of 4 examples, each of which satisfies exactly three of the four axioms and none is the {\it Unanimity} CIF.
\begin{itemize}
\item Consider the CIF that first defines as $J$ the agents for whom there is a consensus in the profile $C$ that they are $J$. If this group is empty and $C_{1}\neq N$, then $J(C)=\{i:i\in C_{1}\ and \ i\notin C_{k}\ for\ all\ k\neq 1\}$. If $C_{1}=N$ then $J(C)=\emptyset$. This CIF verifies all the axioms except (MON). Consider the profile $C=(\{1\},\{2\},\{3\})$. Then $J(C)=\{1\}$. Now suppose that agent $1$ receives one vote more, as in $C'=(\{1\},\{1,2\},\{3\})$. We have $J(C')=\emptyset$. So (MON) is not verified.
\item Consider the following CIF defined inductively. 
Let $J(C,0)$ be the set of all individuals for which there is a consensus that they are $J$ in $C$. 
Expand the set inductively by adding, at the t-th stage, those members for whom there is a consensus among $J(C,t-1)$ that they are $J$. 
In the case that $J(C,t)=N$ for some $t>0$, we define $J(C)=\{i:|\{k|i\in C_{k}\}|=N\}$. 
This CIF verifies all the axioms except (I). 
Consider for example the profiles $C=(\{1,2\},\{1,2\},\{1,3\})$ and $C'=(\{1,2\},\{1,2\},\{3\})$. $J(C)=J(C',1)=\{1,2\}$ while $J(C')=\emptyset$. 
Then (I) is not satisfied for agent $2$.
\item The examples for (C) and (EL)(ii) are analogous to the ones used in the proof of Proposition 3.
\end{itemize} 
\end{proof}
From the uniqueness of the $\mathit{In}$ and $\mathit{U}$ CIFs we obtain the following result:
\begin{coro}
If a CIF verifies (MON), (C), (I) and (EL)(i) or (MON), (C), (I) and (EL)(ii); then it verifies (SYM).
\end{coro}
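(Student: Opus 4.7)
The plan is to leverage the two uniqueness results (Theorems 2 and 3) directly, rather than prove Symmetry from the four axioms in an abstract fashion. Since Theorem 2 identifies $\mathit{In}$ as the unique CIF satisfying (MON), (C), (I) and (EL)(i), any CIF with those four properties must be literally equal to $\mathit{In}$; analogously, by Theorem 3, any CIF satisfying (MON), (C), (I) and (EL)(ii) must be $\mathit{U}$. So the corollary reduces to the following routine check: verify that $\mathit{In}$ and $\mathit{U}$ each satisfy (SYM).

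For the verification step, I would fix a profile $C$ and a pair $j, k$ that is symmetric in $C$ according to clauses (i)--(iv), and then do a case analysis on the index $i$ whose opinion is being inspected, matching each case to one of the clauses. For $i \in N-\{j,k\}$, clause (ii) gives $j \in C_i \Leftrightarrow k \in C_i$; for $i = j$, clause (iii) together with clause (iv) (applied to $j \in C_k$ vs $k \in C_j$) handles the ``self'' and ``cross'' memberships; the analogous statement holds for $i = k$. Consequently, ``$j \in C_i$ for some $i$'' is equivalent to ``$k \in C_i$ for some $i$'', which gives $j \in \mathit{In}(C) \Leftrightarrow k \in \mathit{In}(C)$; and ``$j \in C_i$ for all $i$'' is equivalent to ``$k \in C_i$ for all $i$'', which gives $j \in \mathit{U}(C) \Leftrightarrow k \in \mathit{U}(C)$. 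Thus both $\mathit{In}$ and $\mathit{U}$ satisfy (SYM).

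There is essentially no obstacle here: the real content lives in Theorems 2 and 3, and once those are in hand the corollary is immediate. The only mild subtlety is to remember that (SYM) requires checking all four symmetry clauses simultaneously when pairing up opinions, but since both $\mathit{In}$ and $\mathit{U}$ depend only on the multiset $\{i : j \in C_i\}$ up to the above pairing, the argument is a one-line observation.
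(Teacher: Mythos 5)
Your proposal is correct and follows exactly the route the paper intends: the corollary is stated as an immediate consequence of the uniqueness results in Theorems 2 and 3, so one only needs to check that $\mathit{In}$ and $\mathit{U}$ satisfy (SYM), which your clause-by-clause pairing of the symmetry conditions does properly. Nothing to add.
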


\section{Conclusions}
In this work we deal with the group identification problem and consider alternatives axioms to replace the liberal axiom used by K-R.
When we replace their liberal axiom by other axioms, some of which could be considered \textquotedblleft illiberal\textquotedblright, while another one could be seen as an extreme version of liberalism, we obtain some unique characterizations as well as an impossibility result.
There are many settings in which these alternative axioms can be meaningfully applied, particularly when we consider the spheres of decision allowed in liberal societies, which usually involve other individuals (children, students, etc.)\\
Our first result (Theorem 1) can be interpreted as indicating that when we allow at least two of the individuals to be decisive about their own identity or that of another specific agent, a unique rule satisfies all the proposed axioms. 
Furthermore, this rule allows them to be decisive only over themselves. 
The only CIF that satisfies this is the Strong Liberal CIF, the aggregator that K-R characterize in their work. 
The interesting thing is that this condition implies many other notions generally required to be satisfied by a ``fair'' aggregator. 
When we want a CIF to satisfy Decisiveness, we also obtain that this CIF verifies Monotonicity, Consensus and Independence.  
A social planner then knows that if she wants to implement this notion, it will come associated to those other properties.
\mbox{Proposition 2} shows that if we only allow the agents to be semidecisive over a specific agent (which could be himself), we find more rules satisfying all the axioms. 
That is, when we weaken the Decisiveness axiom, other rules can be chosen.
Finally, when the opinion of any agent is enough to ensure that the class of $J$ individuals is not the empty set nor the entire society, as it is asked in Extreme Liberalism, no rule will satisfy all the axioms.
The only possibility is to have two different CIFs, satisfying each of them only one part of this concept.\\
Another interesting aspect, is that the Strong Liberal CIF proposed by K-R, is the only one that verifies opposite axioms such as Liberalism and Decisiveness. This gives an idea of the power that this CIF has.
As in K-R, the idea here is not to find  ``the appropriate method'' for defining the members of a group. On the contrary, the goal is just to examine as thoroughly possible all the logical aspects of the problem. The axioms presented here should thus not be understood in normative terms, but just as an exploration of alternative notions, somewhat close to some intuitions on what some societies allow.

\end{document}